\newtheorem{theorem}{Theorem}[section]
\begin{document}

\title{On the required radio resources for ultra-reliable communication in highly interfered scenarios}

\author{Gilberto Berardinelli, Ramoni Adeogun
\thanks{Gilberto Berardinelli and Ramoni Adeogun are with the Department of Electronic Systems, Aalborg University, Denmark (e-mail:\{gb, ra\}@es.aau.dk)}}

\markboth{Journal of \LaTeX\ Class Files,~Vol.~14, No.~8, August~2021}%
{Shell \MakeLowercase{\textit{et al.}}: A Sample Article Using IEEEtran.cls for IEEE Journals}


\maketitle

\begin{abstract}
Future wireless systems are expected to support mission-critical services demanding higher and higher reliability. In this letter, we dimension the radio resources needed to achieve a given failure probability target for ultra-reliable wireless systems in high interference conditions, assuming a protocol with frequency hopping combined with packet repetitions. We resort to packet erasure channel models and derive the minimum amount of resource units in the case of receiver with and without collision resolution capability, as well as the number of packet repetitions needed for achieving the failure probability target. Analytical results are numerically validated, and can be used as benchmark for realistic system simulations.
\end{abstract}

\begin{IEEEkeywords}
Ultra-reliable communication, channel hopping, subnetworks.
\end{IEEEkeywords}

\section{Introduction}
\IEEEPARstart{T} {\MakeLowercase{he}} increasing utilization of wireless for mission-critical applications has stressed the importance of ensuring reliable over-the-air communication \cite{mohammed2019mission}.
The 5th generation (5G) radio access technology, standardized by the third generation partnership project (3GPP), has introduced the concept of ultra-reliable low latency communication (URLLC), targeting reliability levels, defined as packet error rates, in the order of $10^{-5}-10^{-6}$ \cite{liu2020analyzing}. The trend towards wireless support of critical services is expected to continue in the coming years with 6th Generation (6G) radios, targeting more and more demanding reliability and availability requirements \cite{viswanathan2020communications}.


For example, the recent concept of 6G \emph{in-X subnetworks} envisions short-range radio cells to be installed inside entities like industrial robots, production modules, vehicles or even human bodies, for critical control applications \cite{berardinelli2021extreme}. Specifically, in-X subnetworks may replace hard real time industrial communication protocols such as PROFINET Isochronous Real Time (IRT) and Ethernet Powerlink, characterized by strictly periodic transmissions with deterministic cycle times \cite{schumacher2008new}. In these systems, a certain packet transmission by e.g., a sensor is expected to reach the receiving node within a bounded time, for the sake of preserving the stability of the control loop. Since in-X subnetworks can become very dense, such as those installed in robots operating in a matrix production mode, interference is a major threat hindering the reliability of the communication. 

In our previous work, we have proposed a transmission protocol featuring a combination of packet repetitions (a.k.a. repetition code) and frequency channel hopping, as a viable approach for achieving ultra-reliable communication in dense subnetwork scenarios characterized by a large number of uncoordinated persistent transmitters concurring for the same radio resources \cite{adeogun2020towards}. In particular, each packet transmission is repeated a number of times over extremely short time intervals, in order to ensure ultra-reliable communication with sub-ms latencies in dense scenarios. Channel hopping is a well-known approach for harvesting frequency and interfering diversity; our protocol is inspired to the main operational principle of the IEEE 802.15.1 standard \cite{aalam2011ieee}, as well as the $\rm{K}$-repetition mode, proposed by the 3GPP as a grant-free alternative to traditional grant-based access \cite{le2021enhancing}. 

The performance of our proposed protocol has been evaluated via system level simulations in an industrial scenario (e.g.,\cite{adeogun2020towards, adeogun2020distributed}), highlighting the complex trade-offs between resource utilization, number of repetitions and deployment characteristics. Though system level simulations are a powerful approach for realistic performance analysis, their results are necessarily conditioned to specific assumptions in terms of operational scenario and radio propagation model (e.g., an indoor factory propagation scenario in our analysis). This makes difficult to generalize the learnings and distinguish the protocol aspects from the combined effects of radio channel and deployment characteristics. 

In this letter, we perform an analysis of our frequency-hopped protocol with focus on the relationship between resource utilization, packet repetitions, and high reliability, for the case of a large number of persistent radio devices sharing the same resource grid. We resort to erasure channel models to derive analytically the minimum amount of radio resources for supporting a target failure probability, and the corresponding required number of packet repetitions. The usage of packet erasure channel models is a very common approach for the basic analysis of wireless communication protocols (see, e.g., the works on coded slotted ALOHA and coded random access in \cite{sun2017coded,paolini2015coded}), as they allow to isolate the pure protocol aspects from the radio propagation and modelling parts. Our aim is to obtain closed form expressions for the number of resource units ensuring a given failure probability target, to be used as a benchmark for coming system level analysis of 6G frequency-hopped systems for ultra-reliable communication. The validity of our analytical results is proved via a numerical study.   


\section{Setting up the scene}
We consider $d+1$ devices connected to one or more access points (APs). We assume one out of the $d+1$ devices to be our device of interest, while the other $d$ devices are potential interferers. This can be the case, for example, of a sensor in a subnetwork reporting periodic measurements to its serving APs, with the other $d$ devices belonging to different subnetworks. The reference scenario is pictorially depicted in Fig. \ref{fig1}. We assume for simplicity that the devices adhere to a common clock and are frame aligned, but their operations are not coordinated as each subnetwork operates independently. 

Devices operate over the same radio resources, where each transmission happens over a frame consisting of a grid of $N_{RU}=pq$ resource units, with $p$ frequency channels and $q$ time slots. Obviously, the wireless system designer must dimension the size of a resource unit according to the expected desired signal strength, data packet size and spectrum availability. Each device maps then a data packet in a resource unit, and transmits it $n$ times by selecting for each repetition a different resource unit, for example according to a pseudo-random pattern. It is assumed that each repetition happens over a different time slot and frequency channel, and therefore the constraint $1\leq n\leq {\rm{min}} (p,q)$ holds. 

An example of such transmission is depicted in Fig. \ref{fig2}, where the patterns of two non-interfering devices are shown. The pseudo-random channel hopping is a well-known approach in wireless communications, and has been shown to provide significant reliability gain thanks to the possibility of exploiting time, frequency and interference diversity \cite{kostic2001fundamentals}. In a real system the hopping pattern can change at every transmission, in order to achieve robustness to jammers and malicious interferers \cite{chiarello2021jamming}.


In the next sections, we derive the minimum amount of resources needed for achieving a given transmission failure probability target, considering packet erasure channel models. 

\section{Resource dimensioning with collisions as disruptive events}
We first consider a classic packet erasure channel model, where collisions are disruptive events, and transmission is always successful in the case of at least one collision-free packet repetition \cite{massey1985collision}. In spite of its simplicity, we believe that the packet erasure channel model is well-suited for the basic analysis of an interference-limited in-X subnetwork scenario. This is because in short range subnetworks the receive signal strength of the desired transmitter is expected to be high; moreover, the AP can use techniques such as power control and link adaptation to counteract fading and ensure correct reception for interference-free transmissions \cite{berardinelli2021extreme}. This means, transmission failures are only caused by collisions. 
By using a packet erasure channel model, the $d$ interfering devices are to be intended as those devices whose proximity generate disruptive collisions when operating over the same resource units as the target device. In the following, we use the subscript $0$ to denote the no collision case for the derivation of the failure probability.

Let us consider the transmission of a given packet repetition by the target device. Since an interfering device is transmitting over $n$ of the $pq$ resource units, the probability of the target device having no collision with its transmissions is given by $\frac{pq-n}{pq}$. The probability of having no collision for a given packet repetition with any of the $d$ interfering devices is then
${P_0}=\left(\frac{pq-n}{pq}\right)^d$. 

The complement probability $(1-P_0)$ represents then the probability of having at least one collision at a given packet repetition of the target device. The probability of transmission failure, i.e., having at least one collision for each of the $n$ packet repetitions, is then given by

\begin{equation}
    P_{{f,0}}=\left(1-{P_0}\right)^n=\left(1-\left(1-\frac{n}{N_{RU}}\right)^d\right)^n.
\label{eq:failure0}
\end{equation}

The amount of required resource unit for a given failure probability target $P^{(\rm{T})}_{{f}}$ set by the system designer can be then calculated by setting $P_{f,0}=P^{(\rm{T})}_{{f}}$ and solving (\ref{eq:failure0}) for $N_{RU}$, leading to

\begin{equation}
    {N}_{RU}=\left\lceil{\frac{n}{1-\left(1-{P^{({\rm{T}})}_{{f}}}^{\frac{1}{n}}\right)^{\frac{1}{d}}}}\right\rceil ,
    \label{eq:nru}
\end{equation}

\noindent with $\left\lceil \cdot \right\rceil$ denoting the ceiling operator. 

\begin{figure}
    \centering
    \includegraphics[scale=0.3]{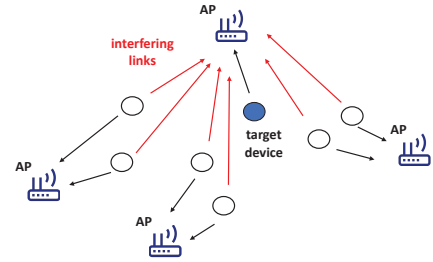}
    \caption{Reference scenario, with a number of devices served by their own AP, interfering the transmissions of the target device.}
    \label{fig1}
\end{figure}

\begin{figure}
    \centering
    \includegraphics[scale=0.34]{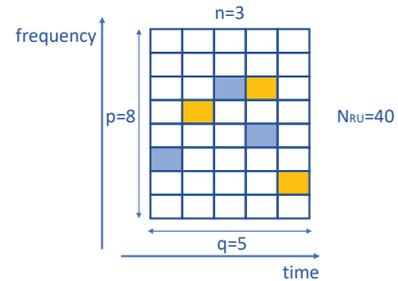}
    \caption{Example of transmission pattern over a frame resource grid. The colors refer to the transmissions of two non-interfering devices.}
    \label{fig2}
\end{figure}

Since  ${N}_{RU}=pq$, the wireless system designer can engineer the split of the resource grid in the time and the frequency domains. Latency-limited services may use a low number $q$ of time slots, while expanding the frequency dimension by increasing the number of frequency channels (calculated as $p={N}_{RU}/q$). Conversely, in the case of limited available spectrum resources, the number $p$ of frequency channels must necessarily be reduced and the number $q$ of time slots increased (calculated as $q={N}_{RU}/p$); this comes at the expense of latency. Also, please observe that the size of a resource unit has to be dimensioned according to the packet size and the achievable signal-to-noise ratio (SNR) at the receiver to ensure correct reception. An analysis of the time/frequency resources needed for achieving a target failure probability as a function of the achievable SNR for the target device is left for future work.

\begin{theorem}
In case the receiver is unable to resolve collisions, for a large number of interfering devices the minimum amount of resource units $N^{min}_{RU}$ for achieving $P^{({\rm{T}})}_{{f}}$ can be approximated as

\begin{equation}
N^{min}_{RU}\approx \left\lceil\frac{{\rm{ln}}\left(P^{({\rm{T}})}_{{f}}\right)}{\left[\left(\frac{1}{2}\right)^{\frac{1}{d}}-1\right]{\rm{ln}}\left(2\right)}\right\rceil. 
\label{eq:nrumin}
\end{equation}
\end{theorem}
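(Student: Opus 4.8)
The plan is to treat the number of repetitions $n$ as a continuous design variable, minimize the resource count given by (\ref{eq:nru}) over $n$ for a fixed target (which I abbreviate as $P = P^{({\rm{T}})}_{f}$), and then substitute the optimal $n$ back in. Relaxing the integrality of $n$, write $N_{RU}(n)=n/\bigl[1-(1-P^{1/n})^{1/d}\bigr]$. The clean way to carry out the minimization is the substitution $x = P^{1/n}\in(P,1)$, which gives $n=\ln P/\ln x$ and
\[
N_{RU}(x)=\frac{\ln P}{\ln x\,\bigl[1-(1-x)^{1/d}\bigr]}.
\]

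For a large number of interferers I would linearize the inner power, $(1-x)^{1/d}=\exp\!\bigl(\tfrac1d\ln(1-x)\bigr)\approx 1+\tfrac1d\ln(1-x)$, so that $1-(1-x)^{1/d}\approx -\tfrac1d\ln(1-x)$ and
\[
N_{RU}(x)\approx \frac{-\,d\,\ln P}{\ln x\,\ln(1-x)}.
\]
Since $\ln P<0$ while $\phi(x)=\ln x\,\ln(1-x)>0$ on $(0,1)$, minimizing $N_{RU}$ is equivalent to maximizing $\phi(x)$; this conveniently removes both $P$ and $d$ from the optimization and reduces the problem to a single elementary maximization.

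The key step is then to locate the maximizer of $\phi$. Here I would exploit the symmetry $\phi(x)=\phi(1-x)$: together with $\phi>0$ on $(0,1)$ and $\phi\to0$ at both endpoints, this points to an interior maximum at the symmetric point $x=1/2$. To make this rigorous rather than merely plausible, I would verify from $\phi'(x)=\ln(1-x)/x-\ln x/(1-x)$ that $\phi'$ vanishes at $x=1/2$ and changes sign from positive to negative there, establishing $x=1/2$ as the unique global maximizer. This translates back to the optimal number of repetitions $n^{\star}=\ln P/\ln(1/2)=\log_2(1/P)$.

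Finally I would substitute $x=1/2$ (equivalently $n^{\star}$) into the \emph{exact} formula (\ref{eq:nru}) rather than its large-$d$ surrogate, so that the $d$-dependence is retained faithfully: with $(1-x)^{1/d}=(1/2)^{1/d}$ and $n^{\star}=\ln P/\ln(1/2)$,
\[
N^{min}_{RU}=\frac{n^{\star}}{1-(1/2)^{1/d}}=\frac{\ln P}{\ln(1/2)\,\bigl[1-(1/2)^{1/d}\bigr]}=\frac{\ln P}{\ln 2\,\bigl[(1/2)^{1/d}-1\bigr]},
\]
and restoring the ceiling operator (since $N_{RU}$ must be an integer) yields (\ref{eq:nrumin}). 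I expect the main obstacle to be the maximization step: proving that $x=1/2$ is the genuine global maximizer of $\phi$ (not just a stationary point), and arguing that the large-$d$ linearization does not shift the location of the optimum, since the neglected terms are $O(1/d)$ and vanish as $d$ grows. The back-substitution and algebraic simplification are then routine.
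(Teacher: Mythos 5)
Your proposal is correct, and its skeleton matches the paper's: relax $n$ to the continuous domain, use a large-$d$ reduction to locate the optimal repetition number at $P^{1/n}=\frac{1}{2}$, i.e., $n^{\star}=-\ln\left(P^{({\rm{T}})}_{f}\right)/\ln 2$ as in (\ref{eq:firstproof4}), and substitute back into the exact expression (\ref{eq:nru}) to obtain (\ref{eq:nrumin}). The execution, however, differs in a way worth noting. The paper differentiates $N_{RU}=f(n)$ in the original variable via the quotient rule, obtaining the stationarity condition (\ref{eq:firstproof}), and only afterwards sends $h=1/d\to 0$ to collapse it to $y\ln y-(1-y)\ln(1-y)=0$, i.e., (\ref{eq:firstproof3}), whose relevant root $y=\frac{1}{2}$ is simply asserted. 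You swap the two operations: change variables to $x=P^{1/n}$, apply the large-$d$ linearization $(1-x)^{1/d}\approx 1+\frac{1}{d}\ln(1-x)$ to the objective itself, and reduce the problem to maximizing $\phi(x)=\ln x\,\ln(1-x)$ on $(0,1)$. Your stationarity condition $(1-x)\ln(1-x)=x\ln x$ is exactly the paper's (\ref{eq:firstproof3}), so the two routes meet at the same transcendental equation; what your route buys is the symmetry observation $\phi(x)=\phi(1-x)$, which explains \emph{why} the root is $\frac{1}{2}$ (and hence why $n^{\star}$ is independent of $d$), rather than merely exhibiting it among the roots. Two caveats apply equally to both arguments: (i) uniqueness of the interior optimizer requires showing that $x\ln x-(1-x)\ln(1-x)$ vanishes only at $\frac{1}{2}$ on $(0,1)$ --- the paper asserts the three zeros $0,\frac{1}{2},1$, while your local sign-change check of $\phi'$ would need this global statement to rule out other critical points; and (ii) the interchange of the large-$d$ approximation with the optimization is left informal, a step you at least flag explicitly. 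Neither caveat puts you below the paper's own level of rigor, and your back-substitution of the optimizer into the exact (\ref{eq:nru}) correctly preserves the $d$-dependence appearing in (\ref{eq:nrumin}).
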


\begin{proof}
Let us generalize (\ref{eq:nru}) in the continuous domain as a function of $n$, i.e., $N_{RU}=f(n)$, $n\in \mathbb{R}$, $n\geq 1$. $f(n)$ is a convex function, and attains its minimum at the zero of its derivative. As $f(n)$ is in the form of a quotient $\frac{u}{v}$, its derivative with respect to $n$ will be zero if $vu'-uv'=0$, i.e., \useshortskip

\begin{multline}
    1-\left(1-\left(P^{({\rm{T}})}_{{f}}\right)^{\frac{1}{n}}\right)^{\frac{1}{d}} + \\ +\frac{{\rm{ln}}\left(P^{({\rm{T}})}_{{f}}\right)\left(P^{({\rm{T}})}_{{f}}\right)^{\frac{1}{n}}}{dn} \left(1-\left(P^{({\rm{T}})}_{{f}}\right)^{\frac{1}{n}}\right)^{\frac{1}{d}-1}=0.
    \label{eq:firstproof}
\end{multline}

\noindent By defining $h=\frac{1}{d}$, and dividing both terms of the equality by $h$, (\ref{eq:firstproof}) can be rewritten as \useshortskip

\begin{multline}
    -\frac{\left(1-\left(P^{({\rm{T}})}_{{f}}\right)^{\frac{1}{n}}\right)^{h}-1}{h} + \\ +\frac{{\rm{ln}}\left(P^{({\rm{T}})}_{{f}}\right)\left(P^{({\rm{T}})}_{{f}}\right)^{\frac{1}{n}}}{n} \left(1-\left(P^{({\rm{T}})}_{{f}}\right)^{\frac{1}{n}}\right)^{h-1}=0.
    \label{eq:firstproof2}
\end{multline}

\noindent For large $d$, we have $h \rightarrow 0$, and the leftmost term $\frac{\left(1-\left(P^{({\rm{T}})}_{{f}}\right)^{\frac{1}{n}}\right)^{h}-1}{h}\rightarrow {\rm{ln}}\left(1-\left(P^{({\rm{T}})}_{{f}}\right)^{\frac{1}{n}}\right)$. (\ref{eq:firstproof2}) can be then expressed as \useshortskip

\begin{equation}
    y\rm{ln}\left(y\right)-\left(1-y\right)\rm{ln}\left(1-y\right)=0
    \label{eq:firstproof3}
\end{equation}

\noindent where $y=\left(P^{({\rm{T}})}_{{f}}\right)^{\frac{1}{n}}$. (\ref{eq:firstproof3}) has three zeros, i.e., $y=0, \frac{1}{2}, 1$, with $\frac{1}{2}$ being the solution of interest. We therefore obtain \useshortskip

\begin{equation}
    n=\frac{1}{{\rm{log}}_{P^{({\rm{T}})}_{{f}}}\left(\frac{1}{2}\right)}=-\frac{{\rm{ln}}\left(P^{({\rm{T}})}_{{f}}\right)}{{\rm{ln}}\left(2\right)}.
    \label{eq:firstproof4}
\end{equation}

\noindent This represents the minimum number of packet repetition for achieving $P^{({\rm{T}})}_{{f}}$; interestingly, it does not depend on the number of devices $d$. 
By applying (\ref{eq:firstproof4}) in (\ref{eq:nru}), we obtain the result in (\ref{eq:nrumin}).
\end{proof}

Since for large $d$, $1/d \rightarrow 0$, we can resort to a Maclaurin expansion of the type $y^x\approx 1+x{\rm{ln}}(y)$, and obtain $\left(\frac{1}{2}\right)^{1/d}\approx 1+\frac{1}{d}\rm{ln}\left(\frac{1}{2}\right)$. Therefore:

\begin{equation}
N^{min}_{RU}\approx -\left\lceil\frac{d}{{\rm{ln}}^2\left(2\right)}{\rm{ln}}\left(P^{({\rm{T}})}_{{f}}\right)\right\rceil=-\left\lceil2.0814\cdot d\cdot {\rm{ln}}\left(P^{({\rm{T}})}_{{f}}\right)\right\rceil.
\label{eq:nrumin2}
\end{equation}

\noindent The minimum amount of radio resource units needed for achieving a failure probability $P^{({\rm{T}})}_{{f}}$ has then a logarithmic relationship with the failure probability itself, while it scales linearly with the number of devices.


\section{Dimensioning with resolvable collisions}
Let us assume now that our receiver is able to resolve collisions up to a given extent, thanks to e.g., the usage of successive cancellation receivers and multi-antenna processing \cite{tse2005fundamentals}. In this case, collisions are not disruptive events. Let us denote the number of colliding devices at a given resource unit as $N_{\text{c}}$. When deriving the failure probability, the subscript $c=0, \cdots, N^{\rm{max}}_{\text{c}}$, is used to denote the number of resolvable collisions at a given packet repetition, with $N^{\rm{max}}_{\text{c}}$ being the maximum number of resolvable collisions.

The probability that a packet transmission of the target device collides with one of the $n$ transmissions of a given interfering device is given by $\frac{n}{N_{RU}}$. The probability of having $N_{\text{c}}=c$ collisions in the presence of $d$ interfering devices, with $0\leq c \leq d$, can be then characterized by the probability mass function of a binomial distribution, i.e.,



\begin{equation}
    P(N_{\text{c}}=c)=\binom{d}{c}\left(\frac{n}{N_{RU}}\right)^c\left(1-\frac{n}{N_{RU}}\right)^{d-c}.
\end{equation}

\noindent We are then softening the definition of traditional packet erasure channel models, and assuming that a transmission failure happens when all packet repetitions of the target device experience more than $N^{\rm{max}}_{\text{c}}$ collisions, thanks to the capability of the receiver to resolve at most $N^{\rm{max}}_{\text{c}}$ collisions. The corresponding failure probability can be then expressed as 

\begin{equation}
    P_{f,N^{\rm{max}}_{\text{c}}}=\left(1-\sum^{N^{\rm{max}}_{\text{c}}}_{c=0}P(N_{\text{c}}=c)\right)^n.
    \label{eq:failureNc}
\end{equation}

\noindent Note that (\ref{eq:failureNc}) leads to (\ref{eq:failure0}) for $N^{\rm{max}}_{\text{c}}=0$. The minimum number of radio resource units needed for achieving a certain failure probability target $P^{(\rm{T})}_{{f}}$ can be calculated from  (\ref{eq:failureNc}) by setting  $P_{f,N^{\rm{max}}_{\text{c}}}=P^{(\rm{T})}_{{f}}$ and solving it numerically for $N_{RU}$.

\subsection{Case of a single resolvable collision}
\noindent For the case $N^{\rm{max}}_{\text{c}}=1$, it is possible to derive an explicit expression for $N_{RU}$, as well as for its minimum $N^{min}_{RU}$. 
\begin{theorem}
In case the receiver is able to resolve at most one collision, for a large number of interfering devices the amount of resource units $N_{RU}$ for achieving a target failure probability $P^{(\rm{T})}_{{f}}$ can be approximated as

\begin{equation}
N_{RU}\approx -\left\lceil\frac{nd}{W_{-1}\left(\frac{{P^{(\rm{T})}_{{f}}}^{\frac{1}{n}}-1}{\rm{e}}\right)+1}\right\rceil. 
\label{eq:secondproof0}
\end{equation}

\noindent where $W_{-1}\left(\cdot\right)$ denotes the negative branch of the Lambert W function \cite{corless1996lambertw}. 
\end{theorem}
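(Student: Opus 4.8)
The plan is to start from the failure-probability expression (\ref{eq:failureNc}) specialized to $N^{\rm{max}}_{\text{c}}=1$ and reduce the defining equation $P_{f,1}=P^{(\rm{T})}_{{f}}$ to a single transcendental equation that the Lambert W function inverts in closed form. First I would write out the truncated sum explicitly: with the $c=0$ and $c=1$ terms of the binomial pmf and the shorthand $x=n/N_{RU}$, one gets $\sum_{c=0}^{1}P(N_{\text{c}}=c)=(1-x)^{d}+dx(1-x)^{d-1}$, which factors neatly as $(1-x)^{d-1}\left[1+(d-1)x\right]$. Substituting this into (\ref{eq:failureNc}), setting the result equal to $P^{(\rm{T})}_{{f}}$ and taking the $n$-th root collapses everything to the scalar equation $(1-x)^{d-1}\left[1+(d-1)x\right]=1-\left(P^{(\rm{T})}_{{f}}\right)^{1/n}$.

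Next I would enter the large-$d$ regime by introducing the scaled variable $z=dx=nd/N_{RU}$. As $d\to\infty$ with $z$ held fixed, the standard limits $(1-z/d)^{d-1}\to \mathrm{e}^{-z}$ and $1+(d-1)x\to 1+z$ reduce the left-hand side to $\mathrm{e}^{-z}(1+z)$, so the equation becomes $\mathrm{e}^{-z}(1+z)=1-\left(P^{(\rm{T})}_{{f}}\right)^{1/n}$. This is the crux of the argument: a product of an exponential and a linear factor set equal to a constant is exactly the canonical shape the Lambert W function resolves.

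To expose the $W$ form I would substitute $u=-(1+z)$, so that $\mathrm{e}^{-z}=\mathrm{e}\,\mathrm{e}^{u}$ and the equation rearranges to $u\,\mathrm{e}^{u}=\bigl(\left(P^{(\rm{T})}_{{f}}\right)^{1/n}-1\bigr)/\mathrm{e}$. By the definition of the Lambert W function this gives $u=W\!\left(\bigl(\left(P^{(\rm{T})}_{{f}}\right)^{1/n}-1\bigr)/\mathrm{e}\right)$, and unwinding $z=-(1+u)$ together with $z=nd/N_{RU}$ yields $N_{RU}=-nd/\bigl(W(\cdot)+1\bigr)$, which is (\ref{eq:secondproof0}) up to the integer rounding.

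The main obstacle, and the step I would treat most carefully, is the branch selection. Since $P^{(\rm{T})}_{{f}}<1$, the argument $\bigl(\left(P^{(\rm{T})}_{{f}}\right)^{1/n}-1\bigr)/\mathrm{e}$ lies in $(-1/\mathrm{e},0)$, where $W$ is double-valued: the principal branch $W_{0}$ returns values in $(-1,0)$ while $W_{-1}$ returns values below $-1$. The physical requirement that $z=nd/N_{RU}$ be positive forces $u=-(1+z)<-1$, which singles out the $W_{-1}$ branch; the $W_0$ root would give $z<0$ and hence a negative, unphysical resource count, so it is discarded. Once the real-valued expression $N_{RU}=-nd/\bigl(W_{-1}(\cdot)+1\bigr)$ is established, the ceiling in (\ref{eq:secondproof0}) is a cosmetic final step enforcing integrality.
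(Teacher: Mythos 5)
Your proof is correct, and it reaches (\ref{eq:secondproof0}) by a genuinely different and noticeably more direct route than the paper. The paper does not factor the two-term binomial sum; instead it observes that the left-hand side of (\ref{eq:secondproof}) has the structure $f(N_{RU})+N_{RU}f'(N_{RU})=[N_{RU}f(N_{RU})]'$ with $f(N_{RU})=\left(1-\frac{n}{N_{RU}}\right)^d$, integrates this to posit $N_{RU}f(N_{RU})=aN_{RU}+b$ with $a=1-{P^{(\rm{T})}_{{f}}}^{1/n}$ and an \emph{undetermined} constant $b$, applies the large-$d$ approximation $f(N_{RU})\approx e^{-nd/N_{RU}}$, solves the resulting equation for $w=1/N_{RU}$ with one application of Lambert W (in terms of $b$), and then pins down $b$ by imposing that the W argument equal $-\frac{1}{e}$ (the point where the two real branches coalesce, i.e., a tangency/double-root condition), which requires a \emph{second} application of Lambert W. Your argument dispenses with all of that: the factorization $(1-x)^{d}+dx(1-x)^{d-1}=(1-x)^{d-1}\left[1+(d-1)x\right]$, the limit $d\to\infty$ at fixed $z=nd/N_{RU}$ giving $e^{-z}(1+z)=1-\left(P^{(\rm{T})}_{{f}}\right)^{1/n}$, and the single substitution $u=-(1+z)$ produce the closed form in one shot, with no undetermined constants. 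Your branch selection is also cleaner and fully justified: the argument of $W$ lies in $(-1/e,0)$, and positivity of $z$ (equivalently of $N_{RU}$) forces $W+1<0$, i.e., the $W_{-1}$ branch — the paper reaches the same conclusion but states it more tersely. The main thing the paper's derivation offers in exchange is an interpretation of the solution as a tangency between the curve $N_{RU}e^{-nd/N_{RU}}$ and a line, whereas your derivation makes transparent \emph{why} Lambert W appears at all (the equation is literally of the form $ue^{u}=\text{const}$); as a proof of the stated approximation, yours is the tighter argument.
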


\begin{proof} 
For $N^{\rm{max}}_{\text{c}}=1$, by setting $P^{(\rm{T})}_{{f}}=P_{f,1}$, (\ref{eq:failureNc}) reads
\begin{equation}
    P^{(\rm{T})}_{{f}}=\left(1-\left(1-\frac{n}{N_{RU}}\right)^d-\frac{nd}{N_{RU}}\left(1-\frac{n}{N_{RU}}\right)^{d-1}\right)^n.
    \label{eq:secondproof}
\end{equation}
Let us generalize (\ref{eq:secondproof}) in the continuous domain of the $N_{RU}$ variable, i.e., $N_{RU}\in\mathbb{R}$. Note that, if $f(N_{RU})=\left(1-\frac{n}{N_{RU}}\right)^d$, then $f'(N_{RU})=\frac{nd}{N^2_{RU}}\left(1-\frac{n}{N_{RU}}\right)^{d-1}$. We therefore obtain $f(N_{RU})+N_{RU}f'(N_{RU})=1-{P^{(\rm{T})}_{{f}}}^{\frac{1}{n}}$. By observing that $f(N_{RU})+N_{RU}f'(N_{RU})=[N_{RU}f(N_{RU})]'$, it must hold that its anti-derivative is of the form $N_{RU}f(N_{RU})=aN_{RU}+b$, with $a=1-{P^{(\rm{T})}_{{f}}}^{\frac{1}{n}}$, and $b$ a constant.

\noindent For large $d$, we can use the approximation $\left(1-\frac{n}{N_{RU}}\right)^d\approx e^{-\frac{nd}{N_{RU}}}$, and (\ref{eq:secondproof}) can be therefore expressed as

\begin{equation}
    N_{RU}e^{-\frac{nd}{N_{RU}}}=aN_{RU}+b.
    \label{eq:secondproof2}
\end{equation}
By setting $w=1/N_{RU}$, (\ref{eq:secondproof2}) can be reformulated as $e^{-ndw}-bw-a=0$. Such equation can be solved by using Lambert W functions \cite[p.4]{edwards2019extension}, leading to

\begin{equation}
    w=-\frac{1}{nd}{\rm{ln}}\left(\frac{b}{nd}W\left(\frac{nde^{\frac{and}{b}}}{b}\right)\right),\label{eq:secondproof3}
\end{equation}

\noindent where $W(x)$ denotes the multivariate Lambert W function. Since for such function it holds that ${\rm{ln}}(W(x))={\rm{ln}}(x)-W(x)$ \cite[p.2]{edwards2019extension}, (\ref{eq:secondproof3}) can be rewritten as 
\begin{multline}
   w=-\frac{1}{nd}\left({\rm{ln}}\left(\frac{b}{nd}\right)+{\rm{ln}}\left(\frac{nde^{\frac{and}{b}}}{b}\right)-W\left(\frac{nde^{\frac{and}{b}}}{b}\right)\right) \\
   =-\frac{and-bW\left(\frac{nde^{\frac{and}{b}}}{b}\right)}{ndb}
\end{multline}
and therefore

\begin{equation}
    N_{RU}\approx-\frac{ndb}{and-bW\left(\frac{nde^{\frac{and}{b}}}{b}\right)}.
    \label{eq:secondproof4}
\end{equation}

\noindent By definition, $W(x)$ has a single real solution equal to $-1$ only at  $x=-\frac{1}{e}$. By setting $\frac{nde^{\frac{and}{b}}}{b}=-\frac{1}{e}$ and $q=nd/b$, we obtain an equation of the form $qe^{aq}+1/e=0$, which is again solvable by using Lambert W functions \cite[p.3]{edwards2019extension}, leading to



\begin{equation}
    b=\frac{nda}{W\left(-\frac{a}{e}\right)}.
    \label{eq:secondproof4a}
\end{equation}

\noindent For $-\frac{1}{e} \leq x<0$, $W(x)$ is two-valued in the positive and negative branch \cite{corless1996lambertw}, i.e., $W_{0}(a/e)$ and $W_{-1}(a/e)$, respectively, with the negative branch solution ensuring positive values for $N_{RU}$ in (\ref{eq:secondproof3}). By inserting (\ref{eq:secondproof4a}) in (\ref{eq:secondproof3}), and applying the ceiling operator, we obtain (\ref{eq:secondproof0}).
\end{proof}

\begin{theorem}
In case the receiver is able to resolve at most one collision, with a large number $d$ of interfering devices the minimum amount of resource units $N^{min}_{RU}$ for achieving a target failure probability $P^{({\rm{T}})}_{{f}}$ can be approximated as

\begin{equation}
    N^{min}_{RU}\approx -\left\lceil 0.7502\cdot d\cdot {\rm{ln}}\left(P^{(\rm{T})}_{{f}}\right)\right\rceil
    \label{eq:thirdproof0}
\end{equation}

\end{theorem}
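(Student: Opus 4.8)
The plan is to minimise, over the number of repetitions $n$, the expression for $N_{RU}$ obtained in Theorem 2, just as Theorem 1 minimised (\ref{eq:nru}) over $n$. Rather than differentiate the Lambert-W form (\ref{eq:secondproof0}) directly, which is cumbersome, I would first recast the underlying constraint (\ref{eq:secondproof}) in its large-$d$ form. Applying $(1-n/N_{RU})^{d}\approx e^{-nd/N_{RU}}$ (and the same for the exponent $d-1$) and introducing the dimensionless load $t=nd/N_{RU}$, the $n$-th root of (\ref{eq:secondproof}) becomes $\left(P^{(\rm{T})}_{{f}}\right)^{1/n}=1-(1+t)e^{-t}$. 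Inverting this relation for $t$ through the negative branch of the Lambert W function is exactly what yields (\ref{eq:secondproof0}); for the minimisation it is cleaner to keep it implicit.

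Next I would solve the constraint for $n=\ln\!\left(P^{(\rm{T})}_{{f}}\right)/\ln\!\left[1-(1+t)e^{-t}\right]$ and substitute into $N_{RU}=nd/t$, giving $N_{RU}=d\,\ln\!\left(P^{(\rm{T})}_{{f}}\right)/\phi(t)$ with $\phi(t)=t\,\ln\!\left[1-(1+t)e^{-t}\right]$. Because the map $t\mapsto n$ is a monotone bijection on $(0,\infty)$ and the prefactor $d\,\ln(P^{(\rm{T})}_{{f}})$ is a negative constant, minimising $N_{RU}$ over $n$ is equivalent to making $\phi(t)$ as negative as possible over $t>0$. Differentiating gives $\phi'(t)=\ln\!\left[1-(1+t)e^{-t}\right]+t^{2}e^{-t}/\left[1-(1+t)e^{-t}\right]$, so the optimal load $t^{\ast}$ is the positive root of $\phi'(t)=0$.

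I expect the main obstacle to be that $\phi'(t)=0$ is transcendental with no closed-form root, so $t^{\ast}$ must be found numerically; a short bracketing argument gives $t^{\ast}\approx 0.93$. That the stationary point is indeed a minimum of $N_{RU}$ follows because $\phi(t)\to 0^{-}$ both as $t\to 0^{+}$ and as $t\to\infty$, so $\phi$ has an interior minimum. Substituting back gives $N^{min}_{RU}=d\,\ln(P^{(\rm{T})}_{{f}})/\phi(t^{\ast})$, and evaluating the numerical constant $1/\phi(t^{\ast})\approx-0.7502$ delivers (\ref{eq:thirdproof0}). As a consistency check I would note that the corresponding optimal repetition count $n^{\ast}=\ln(P^{(\rm{T})}_{{f}})/\ln[1-(1+t^{\ast})e^{-t^{\ast}}]$ is again independent of $d$, mirroring (\ref{eq:firstproof4}).
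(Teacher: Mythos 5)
Your proof is correct and arrives at the right constant, but it takes a genuinely different---and in places cleaner---route than the paper. The paper's proof stays downstream of Theorem 2: it substitutes $z=\bigl(\bigl(P^{({\rm{T}})}_{{f}}\bigr)^{1/n}-1\bigr)/e$ into (\ref{eq:secondproof0}), writes $N_{RU}\approx -d\,\ln\bigl(P^{({\rm{T}})}_{{f}}\bigr)/g(z)$ with $g(z)=\ln(1+ez)\bigl[W_{-1}(z)+1\bigr]$, locates the maximum of $g$ numerically at $z\approx-0.2798$, deduces the optimal repetition number $n\approx-0.6995\,\ln\bigl(P^{({\rm{T}})}_{{f}}\bigr)$, and substitutes back into (\ref{eq:secondproof0}). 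Your parametrization by the load $t=nd/N_{RU}$ is related to the paper's variable by $W_{-1}(z)=-(1+t)$, i.e., $z=-(1+t)e^{-(1+t)}$, under which $1+ez=1-(1+t)e^{-t}$ and hence $g(z)=-t\,\ln\bigl[1-(1+t)e^{-t}\bigr]=-\phi(t)$: the two one-dimensional optimizations are identical up to this change of variable, which is why you recover $t^{\ast}\approx 0.93$ (equivalent to $z^{\ast}\approx-0.2798$), $n^{\ast}\approx-0.70\,\ln\bigl(P^{({\rm{T}})}_{{f}}\bigr)$, and $1/\phi(t^{\ast})\approx-0.7502$, all matching the paper. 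What your route buys: the objective $\phi(t)=t\,\ln\bigl[1-(1+t)e^{-t}\bigr]$ is an elementary function, so no properties of the Lambert W function or its derivative are needed in the optimization; the existence of an interior optimum follows rigorously from your limits $\phi(t)\to 0^{-}$ as $t\to 0^{+}$ and as $t\to\infty$ (the paper instead asserts that $g$ is ``a convex function, with a maximum,'' which is a terminological slip, since a convex function cannot have an interior maximum---your argument is tighter); and the $d$-independence of $n^{\ast}$, together with the linear-in-$d$ scaling of $N^{min}_{RU}$, is manifest from the outset because $d$ enters only through the combination $t=nd/N_{RU}$. What the paper's route buys: it reuses the already-proved Theorem 2 verbatim, so the large-$d$ approximation $\bigl(1-n/N_{RU}\bigr)^{d}\approx e^{-nd/N_{RU}}$ is not redone, and (\ref{eq:thirdproof0}) drops out by direct substitution of (\ref{eq:thirdproof2}). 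Both proofs terminate in a numerical root-find of a transcendental equation, so yours concedes nothing in rigor on that point.
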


\begin{proof}
By setting $z=\frac{\left(P^{(\rm{T})}_{{f}}\right)^{\frac{1}{n}}-1}{\rm{e}}$, we obtain $n=\frac{{\rm{ln}}\left(P^{({\rm{T}})}_{{f}}\right)}{{\rm{ln}}\left(1+ez\right)}$. In the continuous domain, (\ref{eq:secondproof0}) becomes then:

\begin{equation}
    N_{RU}\approx \frac{-d\cdot{\rm{ln}}\left(P^{({\rm{T}})}_{{f}}\right)}{{\rm{ln}}\left(1+ez\right)\left[W_{-1}\left(z\right)+1\right]}. 
    \label{eq:thirdproof}
\end{equation}

\noindent Since $P^{({\rm{T}})}_{{f}}\ll 1$, the numerator of (\ref{eq:thirdproof}) is a positive number, and (\ref{eq:thirdproof}) is minimized when the denominator $g(z)={\rm{ln}}\left(1+ez\right)\left[W_{-1}\left(z\right)+1\right]$ is maximized. $g(z)$ is a convex function, with a maximum at $z\approx-0.2798$. We therefore obtain that $N_{RU}$ is minimized when 

\begin{equation}
n\approx-0.6995\cdot {\rm{ln}}\left(P^{({\rm{T}})}_{{f}}\right).
\label{eq:thirdproof2}
\end{equation}

\noindent By applying (\ref{eq:thirdproof2}) in (\ref{eq:secondproof0}), we obtain (\ref{eq:thirdproof0}).
\end{proof}

$N^{min}_{RU}$ has also a linear dependency with the number of devices, and a logarithmic dependency with the target failure probability. By comparing (\ref{eq:thirdproof0}) with (\ref{eq:nrumin2}), we can observe that the capability of the receiver to resolve one collision leads to a reduction of the minimum amount of resource units of a factor of $\sim\times 2.77$ for achieving the same target failure probability.

\section{Validation of analytical results}
We validate our analytical results with simulation studies, assuming a failure probability target $P^{({\rm{T}})}_{{f}}=10^{-6}$. Analytical results are obtained with (\ref{eq:nru}) and (\ref{eq:secondproof0}), for $N^{\rm{max}}_{\text{c}}=0$ and $N^{\rm{max}}_{\text{c}}=1$, respectively, and the analytical minima with (\ref{eq:nrumin}) and (\ref{eq:thirdproof0}). For the cases where analytical results are not available, i.e., $N^{\rm{max}}_{\text{c}}=2,3$, we resort to numerical results, obtained by numerical inversion of (\ref{eq:failureNc}) with respect to $N_{RU}$. Simulation results are obtained as follows: we first set a given number of packet repetitions $n$, with $n=2, \cdots 26$; we then set a value for $N_{RU}$, and generate up to 20 million samples of transmission patterns for the target and interfering devices; we calculate the empirical empirical failure probability when up to $N^{\rm{max}}_{\text{c}}=0,\cdots,3$ collisions are resolvable; we repeat the pattern generation for a large set of $N_{RU}$ values, and save, for the given $n$, the minimum $N_{RU}$ value that allows to achieve the failure probability target. 
Note that, for each $N_{RU}$, the resource grid is generated by selecting arbitrarily $p$ and $q$ such that $p,q \geq n$ and $pq=N_{RU}$; patterns of both target and interfering devices are then generated by randomly selecting $n$ resource units in the grid, with uniform probability.  

Fig. \ref{figres1} shows the amount of radio resource units needed to support the failure probability target with $d=100$ interfering devices. Simulation results show a perfect match with the analytical and numerical ones. When the receiver has no collision solving capabilities, a large number of packet repetitions is needed in order to have the highest chance of a collision-free transmission. The capability of resolving collisions leads to a lower amount of needed resource units, but also to a lower number of packet repetitions for approaching the minimum, reasonably translating to major transmit power reduction. Observe that, while for $N^{max}_{\text{c}}=0$ the number of required resource units becomes soon fairly insensitive to the number of packet repetitions, the $N^{max}_{\text{c}}\geq 1$ cases exhibit the intuitive trade-off between resource usage and generated interference, i.e., one needs a certain number of repetitions to improve resilience to collisions, but when such number increases, performance is affected by the higher likelihood of collisions themselves. This stresses the need of an accurate setting of the $n$ parameter in a practical system.

Fig. \ref{figres2} shows the accuracy of the analytical results in (\ref{eq:nrumin2}) and (\ref{eq:thirdproof0}) as a function of number of interfering devices $d$. Since the values of $n$ that lead to minimum resource usage in Fig. \ref{figres1} (orange circles) are independent from the number of interfering devices, they can be used for the analysis in Fig. \ref{figres2}. For $N^{\rm{max}}_{\text{c}}=0$, the analytical results are compared to the values obtained with (\ref{eq:nru}), using the value $n$ taken from the result in Fig. \ref{figres1}. With $N^{\rm{max}}_{\text{c}}=1$, the analytical results are compared to the numerical results obtained by numerical searches for solutions of (\ref{eq:secondproof}), with $n$ also taken from the result in Fig. \ref{figres1}. Though results in (\ref{eq:nrumin2}) and (\ref{eq:thirdproof0}) were obtained in the limit for large $d$, the match with the numerical results is nearly perfect also for a low number of interfering devices. 


\begin{figure}
    \centering
    \includegraphics[scale=0.46]{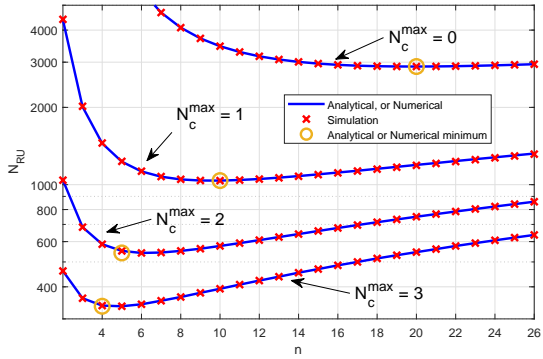}
    \caption{Number of resource units needed for supporting a target failure probability of $10^{-6}$, for 100 interfering devices.}
    \label{figres1}
\end{figure}


\begin{figure}
    \centering
    \includegraphics[scale=0.46]{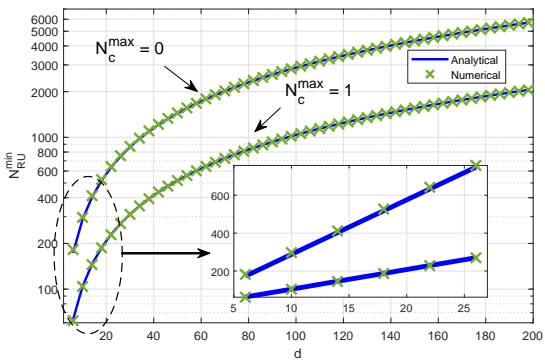}
    \caption{Accuracy of analytical results for minimum number of radio resource elements, assuming a target failure probability of $10^{-6}$.}
    \label{figres2}
\end{figure}


\section{Conclusions}
In this letter, we have studied the problem of dimensioning radio resources for a frequency-hopped radio system targeting ultra-reliable communication in the presence of a large number of persistent interferers. We have derived a closed form expression for the required number of repetitions and the minimum amount of resource units needed to achieve a failure probability target, considering the cases of collisions as disruptive events, as well as receiver tolerance to single collision events. The number of packet repetitions ensuring minimum radio resource usage results to be independent from the number of interfering devices; while the minimum number of resource elements grows linearly with the number of devices, and has a logarithmic dependency with the failure probability target. Numerical results have proved the accuracy of the derived analytical results. 

We believe that our results can provide concrete insights to wireless engineers for dimensioning frequency-hopped wireless protocols for high reliability, and can be used as benchmark for system simulation studies based on realistic models of radio propagation and receiver capabilities. 

\section*{Acknowledgements}
This research is partially supported by the HORIZON-JU-SNS-2022-STREAM-B-01-03 6G-SHINE project (Grant Agreement No. 101095738). 

\bibliographystyle{IEEEtran}
\bibliography{IEEEabrv,bibliography.bib}

\begin{thebibliography}{10}
\providecommand{\url}[1]{#1}
\csname url@samestyle\endcsname
\providecommand{\newblock}{\relax}
\providecommand{\bibinfo}[2]{#2}
\providecommand{\BIBentrySTDinterwordspacing}{\spaceskip=0pt\relax}
\providecommand{\BIBentryALTinterwordstretchfactor}{4}
\providecommand{\BIBentryALTinterwordspacing}{\spaceskip=\fontdimen2\font plus
\BIBentryALTinterwordstretchfactor\fontdimen3\font minus
  \fontdimen4\font\relax}
\providecommand{\BIBforeignlanguage}[2]{{%
\expandafter\ifx\csname l@#1\endcsname\relax
\typeout{** WARNING: IEEEtran.bst: No hyphenation pattern has been}%
\typeout{** loaded for the language `#1'. Using the pattern for}%
\typeout{** the default language instead.}%
\else
\language=\csname l@#1\endcsname
\fi
#2}}
\providecommand{\BIBdecl}{\relax}
\BIBdecl

\bibitem{mohammed2019mission}
N.~A. Mohammed, A.~M. Mansoor, and R.~B. Ahmad, ``Mission-critical machine-type
  communication: {A}n overview and perspectives towards 5{G},'' \emph{IEEE
  Access}, vol.~7, pp. 127\,198--127\,216, 2019.

\bibitem{liu2020analyzing}
Y.~Liu, Y.~Deng, M.~Elkashlan, A.~Nallanathan, and G.~K. Karagiannidis,
  ``Analyzing grant-free access for {URLLC} service,'' \emph{IEEE Journal on
  Selected Areas in Communications}, vol.~39, no.~3, pp. 741--755, 2020.

\bibitem{viswanathan2020communications}
H.~Viswanathan and P.~E. Mogensen, ``Communications in the 6{G} era,''
  \emph{IEEE Access}, vol.~8, pp. 57\,063--57\,074, 2020.

\bibitem{berardinelli2021extreme}
G.~Berardinelli, P.~Baracca, R.~O. Adeogun, S.~R. Khosravirad, F.~Schaich,
  K.~Upadhya, D.~Li, T.~Tao, H.~Viswanathan, and P.~Mogensen, ``Extreme
  communication in 6{G}: {V}ision and challenges for ‘in-{X}’
  subnetworks,'' \emph{IEEE Open Journal of the Communications Society},
  vol.~2, pp. 2516--2535, 2021.

\bibitem{schumacher2008new}
M.~Schumacher, J.~Jasperneite, and K.~Weber, ``A new approach for increasing
  the performance of the industrial {E}thernet system {PROFINET},'' in
  \emph{2008 IEEE International Workshop on Factory Communication
  Systems}.\hskip 1em plus 0.5em minus 0.4em\relax IEEE, 2008, pp. 159--167.

\bibitem{adeogun2020towards}
R.~Adeogun, G.~Berardinelli, P.~E. Mogensen, I.~Rodriguez, and M.~Razzaghpour,
  ``Towards 6{G} in-{X} subnetworks with sub-millisecond communication cycles
  and extreme reliability,'' \emph{IEEE Access}, vol.~8, pp.
  110\,172--110\,188, 2020.

\bibitem{aalam2011ieee}
Z.~Aalam, S.~Vhatkar, and B.~Mishra, ``I{EEE} 802.15. 1 {S}imulation and {BER}
  analysis under the {I}nterference,'' in \emph{2nd International Conference
  and workshop on Emerging Trends in Technology (ICWET), Proceedings published
  by International Journal of Computer
  Applications{\textregistered}(IJCA)}.\hskip 1em plus 0.5em minus 0.4em\relax
  Citeseer, 2011, pp. 24--28.

\bibitem{le2021enhancing}
T.-K. Le, U.~Salim, and F.~Kaltenberger, ``Enhancing {URLLC} uplink
  {C}onfigured-grant transmissions,'' in \emph{2021 IEEE 93rd Vehicular
  Technology Conference (VTC2021-Spring)}.\hskip 1em plus 0.5em minus
  0.4em\relax IEEE, 2021, pp. 1--5.

\bibitem{adeogun2020distributed}
R.~Adeogun, G.~Berardinelli, I.~Rodriguez, and P.~Mogensen, ``Distributed
  dynamic channel allocation in 6{G} in-{X} subnetworks for industrial
  automation,'' in \emph{2020 IEEE Globecom Workshops (GC Wkshps}.\hskip 1em
  plus 0.5em minus 0.4em\relax IEEE, 2020, pp. 1--6.

\bibitem{sun2017coded}
Z.~Sun, Y.~Xie, J.~Yuan, and T.~Yang, ``Coded slotted {ALOHA} for {E}rasure
  {C}hannels: {D}esign and {T}hroughput {A}nalysis,'' \emph{IEEE Transactions
  on Communications}, vol.~65, no.~11, pp. 4817--4830, 2017.

\bibitem{paolini2015coded}
E.~Paolini, C.~Stefanovic, G.~Liva, and P.~Popovski, ``Coded random access:
  {A}pplying codes on graphs to design random access protocols,'' \emph{IEEE
  Communications Magazine}, vol.~53, no.~6, pp. 144--150, 2015.

\bibitem{kostic2001fundamentals}
Z.~Kostic, I.~Maric, and X.~Wang, ``Fundamentals of dynamic frequency hopping
  in cellular systems,'' \emph{IEEE Journal on selected areas in
  communications}, vol.~19, no.~11, pp. 2254--2266, 2001.

\bibitem{chiarello2021jamming}
L.~Chiarello, P.~Baracca, K.~Upadhya, S.~R. Khosravirad, and T.~Wild, ``Jamming
  {D}etection with {S}ubcarrier {B}lanking for 5{G} and beyond in {I}ndustry
  4.0 scenarios,'' in \emph{2021 IEEE 32nd Annual International Symposium on
  Personal, Indoor and Mobile Radio Communications (PIMRC)}.\hskip 1em plus
  0.5em minus 0.4em\relax IEEE, 2021, pp. 758--764.

\bibitem{massey1985collision}
J.~Massey and P.~Mathys, ``The collision channel without feedback,'' \emph{IEEE
  Transactions on Information Theory}, vol.~31, no.~2, pp. 192--204, 1985.

\bibitem{tse2005fundamentals}
D.~Tse and P.~Viswanath, \emph{Fundamentals of wireless communication}.\hskip
  1em plus 0.5em minus 0.4em\relax Cambridge University Press, 2005.

\bibitem{corless1996lambertw}
R.~M. Corless, G.~H. Gonnet, D.~E. Hare, D.~J. Jeffrey, and D.~E. Knuth, ``On
  the {L}ambert{W} function,'' \emph{Advances in Computational mathematics},
  vol.~5, no.~1, pp. 329--359, 1996.

\bibitem{edwards2019extension}
S.~Edwards, ``Extension of {A}lgebraic {S}olutions {U}sing the {L}ambert {W}
  {F}unction,'' \emph{arXiv preprint arXiv:1902.08910}, 2019.

\end{thebibliography}

\end{document}